\documentclass[letterpaper, 10 pt, conference]{ieeeconf}  % Comment this line out
\IEEEoverridecommandlockouts                              % This command is only
\usepackage{graphics} % for pdf, bitmapped graphics files
\usepackage{epsfig} % for postscript graphics files
\usepackage{times} % assumes new font selection scheme installed
\usepackage{amsmath} % assumes amsmath package installed
\usepackage{amssymb}  % assumes amsmath package installed
\usepackage{algorithm}
\usepackage{algpseudocode}
\usepackage{url}
\usepackage{float}
\usepackage{color}
\usepackage{subcaption}

\newcommand{\wb}{\mathbf{w}{}}
\newcommand{\zerob}{\mathbf{0}{}}

\newcommand{\vb}{\mathbf{v}{}}
\newcommand{\yb}{\mathbf{y}{}}
\newcommand{\xb}{\mathbf{x}{}}
\newcommand{\Ab}{\mathbf{A}{}}
\newcommand{\Cb}{\mathbf{C}{}}
\newcommand{\nub}{\boldsymbol{\nu}{}}
\newcommand{\etab}{\boldsymbol{\eta}{}}
\newcommand{\Ib}{\mathbf{I}{}}
\usepackage{xcolor}

\newtheorem{assumption}{Assumption}
\newtheorem{theorem}{Theorem}

\title{\LARGE \bf
On the Generalization Properties of Selective State-Space Models for Filtering Tasks for Unknown Systems
}

\author{Alex Tang$^\dag$, M. Emrullah Ildiz$^\dag$, Batin Kurt, Samet Oymak, and Necmiye Ozay% <-this % stops a space
\thanks{$^\dag$Joint first authors with equal contribution}%
\thanks{A.T., M.E.I., S.O., and N.O. are with the Department of Electrical Engineering and Computer Science, University of Michigan, Ann Arbor, USA. B.K. is with the Department of Electrical and Electronics Engineering, Middle East Technical University, Ankara, Turkey, and was visiting University of Michigan when the work was performed. A.T., B.K., and N.O. are funded in part by ARO grant W911NF-26-1-A127 and ONR grant N00014-21-1-2431 (CLEVR-AI). M.E.I. and S.O. are funded in part by NSF grants CCF-2046816, CCF-2403075, and CCF-2212426, and by ONR grant N00014-24-1-2289.}
}

\begin{document}

\maketitle
\thispagestyle{empty}
\pagestyle{empty}

%%%%%%%%%%%%%%%%%%%%%%%%%%%%%%%%%%%%%%%%%%%%%%%%%%%%%%%%%%%%%%%%%%%%%%%%%%%%%%%%
\begin{abstract}
Selective State-Space Models (SSMs) such as Mamba have emerged as an alternative architecture to self-attention based transformers in sequence modeling tasks. Recent works have demonstrated the use of transformers in some filtering and output prediction tasks via in-context learning. In this paper, we analyze whether structured SSMs can work equally well for filtering of unknown systems. In particular, we train the SSM on trajectory samples from a set of systems. At run-time, the SSM is given the outputs of an unknown system from the same set and is expected to predict the next output online. Theoretically, under appropriate assumptions, we derive generalization bounds as to why SSMs succeed in such tasks. Empirically, we demonstrate the performance via several numerical examples. We also discuss the advantages and disadvantages of SSMs versus transformers for this task.

\end{abstract}

%%%%%%%%%%%%%%%%%%%%%%%%%%%%%%%%%%%%%%%%%%%%%%%%%%%%%%%%%%%%%%%%%%%%%%%%%%%%%%%%
\section{INTRODUCTION}

The problem of predicting future trajectories from noisy observations is central to many control applications. When an accurate system model is available \textit{a priori} and the noise satisfies standard assumptions, e.g., Gaussianity, a broad class of estimation methods is available. These include the  Kalman filter~(KF) for linear systems \cite{Kalman1960}, the extended Kalman filter~(EKF) for nonlinear systems \cite{AndersonBook}, and sampling-based methods such as particle filters for nonlinear and non-Gaussian settings \cite{Moral1997}.

Since classical filtering methods require an explicit model of the system dynamics, their direct application is limited when such a model is unavailable or difficult to obtain. In recent years, learned sequence models have become increasingly prominent for prediction and inference tasks \cite{forgione2023from, akram2026transformersimplicitstateestimators}, especially with the widespread adoption of transformer architectures \cite{AttentionTF}. A key advantage of these models is their ability to perform in-context learning~\cite{brown2020language}, namely, to infer useful task-specific structure directly from the input at run-time. This makes sequence models particularly useful for online prediction problems in which the underlying system is unknown and must be inferred implicitly from past observations. Recent work has shown that transformer-based architectures can indeed be trained in this manner for filtering and output prediction of unknown systems \cite{ZheTransformers2023}.

While transformer models have been gaining traction in filtering applications, they are neither the only nor the undisputed best solution for this problem. Interest has grown for an alternative architecture, the State Space Model, which we shall refer to as the SSM. The quadratic-time self-attention modules of the transformer are replaced in SSMs by linear state-space models which have linear time complexity at prediction time \cite{Alonso2025}. A major improvement to SSMs came with the advent of the Mamba \textit{Selective} SSM architecture \cite{Mamba2024}, which uses a linear time-varying (LTV) state-space model instead of earlier time-invariant architectures. While technically a subset of general State-Space Models, further instances of the initialism “SSM” will refer exclusively to the Selective SSM, as it is the architecture of interest in this paper.

In addition to run time advantages, state-space models have other potential strengths over transformers. Transformer models are trained and operate with a specific, finite context window which dictates the longest sequence length that it is able to consider. While this context window can be, in theory, made arbitrarily large, the onerous computational and memory requirements of transformers mean that in practice, it is possible for the context window to be shorter than the desired prediction horizon. There are workarounds to this, such as the sliding window \cite{Beltagy2020}, but these methods are even more hardware-demanding. More importantly in the context of control systems, theoretical guarantees of transformer performance \cite{ZheTransformers2023, Li2023} do not immediately extend to the sliding window implementation. This makes them problematic in applications where prediction accuracy or safety cannot be exposed to risk. SSMs, however, do not suffer from this drawback. This is due to the fact that they have no context window, since they simply apply a recurrent map to the input at each time step. Therefore, no architectural limitations prevent them from working for arbitrarily long prediction horizons, for which any theoretical guarantees on SSM performance are automatically valid.

In practice, SSMs have already been experimentally applied to filtering problems. Empirical results \cite{Liu2026} show that the performance of SSMs in this setting matches, or sometimes even exceeds, the performance of transformer models in a variety of systems exhibiting complex phenomena such as chaotic dynamics. This, combined with prior experimental work on transformer filters, which showed that they already can approach the performance of Kalman filters \cite{ZheTransformers2023} indicates the potential of SSMs in this domain.

In this paper, we focus on generalization performance guarantees for SSMs. While such analyses for transformers have been published in the literature \cite{ZheTransformers2023}, no such bounds currently exist for SSMs. This paper introduces the — to our knowledge — first generalization guarantees for SSMs specific to filtering applications. In Section~\ref{sec:theory}, we will establish and prove such a bound, and in Section~\ref{sec:experiment}, we will present a number of experimental results which showcase the generalization ability of SSMs.

\emph{Specific Notation:} We will use $\mathbb{R}_{+}$ to mean the non-negative real numbers. $\mathbb{R}_{++}$ correspondingly denotes $\mathbb{R}_+ \setminus \{0\}$. 
The symbol $\otimes$ will denote the Kronecker product, while $\times$ denotes multiplication. The $d$-dimensional identity matrix will be referred to as $\mathbf{I}_d$, but we will drop the subscript in situations where the dimension is clear from context. Finally, $\mathrm{softplus}(x) \triangleq \ln(1+e^x)$.

%%%%%%%%%%%%%%%%%%%%%%%%%%%%%%%%%%%%%%%%%%%%%%%%%%%%%%%%%%%%%%%%%%%%%%%%%%%%%%%%
\section{PRELIMINARIES AND PROBLEM SETUP}
\label{sec:prelim}

\subsection{The SSM Architecture}

In this section, we summarize the SSM architecture in \cite{gu2022parameterization,Mamba2024}. The selective SSM with $n_i$ input and $n_o$ output channels is derived from a continuous-time LTV state-space model
\begin{equation}
\begin{aligned}
    \label{eq:mambaltv}
        \dot{\mathbf{h}}(\bar{t}) & = \mathbf{A}{\mathbf h}(\bar{t}) + \mathbf{B}(\bar{t})\mathbf{y}(\bar{t}) \\
        \hat{\mathbf{y}}(\bar{t}) & = \mathbf{C}(\bar{t})\mathbf{h}(\bar{t}),
\end{aligned}\end{equation}
where $\mathbf{h}(\bar{t}) \in \mathbb{R}^l$ is the hidden state of the SSM, $\mathbf{y}(\bar{t}) \in \mathbb{R}^{n_i}$ is the input, and $\hat{\mathbf{y}}(\bar{t}) \in \mathbb{R}^{n_o}$ is the output. The diagonal matrix $\mathbf{A} \in \mathbb{R}^{n_ol\times n_ol}$ and the weight matrices $\mathbf{B}(\bar{t}) \triangleq \mathbf{I}_{n_o} \otimes \mathbf{W}_\mathbf{B} \mathbf{y}(\bar{t})$ and $\mathbf{C}(\bar{t}) \triangleq \mathbf{I}_{n_o}\otimes\mathbf{y}(\bar{t})^\intercal \mathbf{W}_\mathbf{C}^\intercal$ are learnable parameters. 

This continuous-time model is discretized with a non-uniform sampling time defined by $\Delta_{\bar{t}} \triangleq \mathrm{softplus}(p+\mathbf{q}^\intercal\mathbf{y}(\bar{t}))$, again with learnable parameters $p$ and $\mathbf{q}$. Assuming a zero-order hold, this gives a time-varying discrete-time linear system.\footnote{The time $\bar{t}$ of the continuous-time system \eqref{eq:mambaltv} is just a fictitious construct that does not correspond to real time, whereas time-index $t$ in the discrete-time model will be treated as the actual time from now on.} When taking $\mathbf{h}(0)=0$, the unrolled form of this discrete-time dynamics is written as \cite{Mario2025}:
\begin{equation}
\label{eq:ps5}
    \hat{\mathbf{y}}_t = \left(\mathbf{I} \otimes\mathbf{y}_{t-1}^\intercal \mathbf{W}_\mathbf{C}^\intercal\right)\sum_{i=0}^{t-1}\left[\mathbf{A}_\Delta^{t-1-i} \Delta_i\left(\mathbf{I} \otimes \mathbf{W}_\mathbf{B} \mathbf{y}_i \right)\mathbf{y}_i \right],
\end{equation}
where $\mathbf{A}_\Delta^i = \exp\left[(\Delta_{t-1} + \cdots + \Delta_{t-i})\mathbf{A}\right]$ for all $i>0$ and $\mathbf{A}_\Delta^i = \mathbf{I}$ if $i = 0$.

At each time $t$, then, the SSM maps the input sequence $\mathcal{Y}_{t-1} \triangleq \{\mathbf{y}_i\}_{i = 0}^{t-1}$ to the output $\hat{\mathbf{y}}_{t}$. As shorthand for \eqref{eq:ps5}, we write this relation as $\hat{\mathbf{y}}_{t} \triangleq \widehat{\mathrm{SSM}}(\mathcal{Y}_{t-1})$.

Finally, we denote by $\mathcal{A}$ the set of all possible SSMs with this architecture, with parameters defined by the parameter set $\Theta \triangleq \{\mathbf{A},\mathbf{W_B}, \mathbf{W_C},p,\mathbf{q}\}$. All elements of $\Theta$ are assumed to be bounded in the $2$-norm such that $\|\cdot\|_2 < \infty$.

\subsection{Filtering Problem}

Let us specify the filtering problem of interest. Define a distribution $\mathcal{D}_\mathrm{sys}$, from which are drawn systems $\mathcal{S}$ of the form
\begin{align}
    \label{eq:ps1}
    \begin{aligned}
        \mathbf{x}^{\mathcal{S}}_{t+1} &= f_{\mathcal{S}}(\mathbf{x}^{\mathcal{S}}_{t}) + \mathbf{w}_{t+1}\\
        \mathbf{y}^{\mathcal{S}}_{t} &= g_{\mathcal{S}}(\mathbf{x}^{\mathcal{S}}_{t}) + \mathbf{v}_{t}.
    \end{aligned}
\end{align}

The state and output of this system at time $t$ are, respectively, $\mathbf{x}^{\mathcal{S}}_{t}\in\mathbb{R}^n$ and $\mathbf{y}^{\mathcal{S}}_{t}\in\mathbb{R}^m$. $f_{\mathcal{S}}: \mathbb{R}^n\rightarrow\mathbb{R}^n$ and $g_{\mathcal{S}}: \mathbb{R}^n\rightarrow\mathbb{R}^m$ are functions dictating the state dynamics and output, respectively, of system $\mathcal{S}$. The system is further subject to process noise $\mathbf{w}_{t}\sim \mathcal{N}(0, {\sigma_\mathbf{w}^2}\mathbf{I})$ and measurement noise $\mathbf{v}_{t}\sim \mathcal{N}(0, \sigma_\mathbf{v}^2\mathbf{I})$, which are i.i.d. for any system drawn from  $\mathcal{D}_\mathrm{sys}$ and for all time. We use the notation 
$\mathbf{x}_{t}^{\mathcal{S}}=f_{\mathcal{S}}^t(\mathbf{x}^{\mathcal{S}}, \mathcal{W}_{t})$ to denote the $t$-step state evolution of the system from the state $\mathbf{x}^{\mathcal{S}}$ when subject to noise $\mathcal{W}_{t} = \{\mathbf{w}_{i}\}_{i=1}^{t}$.

 We make the following three assumptions on the system defined by \eqref{eq:ps1}:
 
\begin{assumption}
    \label{as:fandgassumptions}
   The system has initial condition $\mathbf{x}_0^\mathcal{S} = \mathbf{0}$. Additionally, $f_\mathcal{S}(\mathbf{0}) = \mathbf{0} = g_\mathcal{S}(\mathbf{0})$.
\end{assumption}

\begin{assumption}
    \label{as:Lipschitz}
    We assume the output is Lipschitz continuous, such that there exists $L_g \in \mathbb{R}_{++}$ such that for all systems $\mathcal{S}$ and for all $\mathbf{x}$, $\mathbf{x}'$, we have $\|g_\mathcal{S}(\mathbf{x}) - g_\mathcal{S}(\mathbf{x}')\| \leq L_g \|\mathbf{x} - \mathbf{x}'\|$.
\end{assumption}

\begin{assumption}
    \label{as:ieiss}
    We assume that the system dynamics are incrementally exponentially input-to-state stable, that is, there exist constants $K \in\mathbb{R}_+$, $\rho\in[0, 1)$, $C_\rho \in \mathbb{R}_{++}$ such that for all $t$,
\begin{multline}
    \label{eq:ps2}
    \|f_{\mathcal{S}}^t(\mathbf{x}^{\mathcal{S}}, \mathcal{W}_{t}) - f_{\mathcal{S}}^t(\mathbf{x'}^{\mathcal{S}}, \mathcal{W}'_{t})\|_2 \\\leq C_\rho\rho^t \|\mathbf{x}^{\mathcal{S}} - \mathbf{x}'^{\mathcal{S}}\|_2 + K\|\mathcal{W}_{t} - \mathcal{W}'_{t}\|_{2, 1},
\end{multline}
where $\|\cdot\|_{2,1}$ means that the $2$-norm is taken for each time element and then the $1$-norm is taken over $t$.
\end{assumption}

\begin{algorithm}[t]
\caption{SSM Training}
\begin{algorithmic}[1]
\Require distribution $\mathcal{D}_\mathrm{sys}$, number of systems $M$, time horizon $T$
\State Sample $M$ systems $\mathcal{S}_k$ from $\mathcal{D}_\mathrm{sys}$
\State From each system $\mathcal{S}_k$, sample a length-$T$ trajectory $\mathcal{Y}^{\mathcal{S}_k}$
%\State $\widehat{SSM} \gets \mathtt{Initialize}$
\State Solve
\begin{equation}
    \label{eq:ps6}
    \widehat{\mathrm{SSM}} = \arg\max_{\mathrm{SSM}\in\mathcal{A}} \frac{1}{MT} \sum_{k=1}^M\sum_{t=1}^T\ell(\mathbf{y}_t^{\mathcal{S}_k}, \mathrm{SSM}(\mathcal{Y}_{t-1}^{\mathcal{S}_k}))
\end{equation}
\Return $\widehat{\mathrm{SSM}}$ 
\end{algorithmic}
\label{alg:main}
\end{algorithm}

Our goal is to train an SSM such that, at run-time, for an arbitrary (unknown) system $\mathcal{S}_0$ drawn from $\mathcal{D}_\mathrm{sys}$, the SSM is fed the system output $\mathcal{Y}^{\mathcal{S}_0}_{t-1}$ and aims to predict $\mathbf{y}_t^{\mathcal{S}_0}$ at each time $t$. In order to solve this filtering problem, we train an SSM using Algorithm~\ref{alg:main}, which solves an empirical risk minimization problem.

As mentioned, once trained, we sample another system $\mathcal{S}_0$ from $\mathcal{D}_\mathrm{sys}$ and require $\widehat{\mathrm{SSM}}$ to predict the next-step output, \textit{viz.}, $\hat{\mathbf{y}}^{{\mathcal{S}_0}}_t = \widehat{\mathrm{SSM}}(\mathcal{Y}_{t-1}^{{\mathcal{S}_0}})$. The use of a diverse set of training data which includes many systems drawn from the distribution $\mathcal{D}_\mathrm{sys}$ is an example of in-context learning and is what permits the model to generalize at run-time to systems not seen during training. This property is critical here, as $\mathcal{S}_0$ is not required to be included in $\{\mathcal{S}_k\}_{k=0}^M$.

Previous work has shown empirically that Mamba can perform in-context learning as well as transformers in decision, regression, and I/O tasks \cite{Park2024}. However, these are limited to experimental results and do not provide rigorous bounds. Additionally, these results come from the setting of sequences where each element is assumed to be i.i.d. — an assumption does not hold for dynamics. This work, then, provides a proof of the SSM's ability to perform in-context learning and generalization in the setting of dynamical systems and therefrom its viability as a filtering algorithm.

%%%%%%%%%%%%%%%%%%%%%%%%%%%%%%%%%%%%%%%%%%%%%%%%%%%%%%%%%%%%%%%%%%%%%%%%%%%%%%%%
\section{THEORETICAL ANALYSIS}
\label{sec:theory}

We seek to show generalization properties of the trained SSM filter, \textit{i.e.}, that given sufficient training data and prediction time horizon, the generalization performance of the SSM filter trained by solving \eqref{eq:ps6} will converge to that of a hypothetical optimal $\mathrm{SSM}^\star$. It is shown in \cite{ZheTransformers2023} that proving this result for sequence models requires two ingredients: 1) the algorithmic robustness of sequence model used — in our case, the SSM — to perturbations in input data and 2) a performance bound for generalization in in-context learning. The former was shown for transformers in \cite{Li2023}, but a similar bound has not been derived for SSMs. Thus, we shall first derive such a result for SSMs in the context of dynamical systems, then show that the generalization bound follows immediately from the work in \cite{ZheTransformers2023}.

\subsection{SSM Robustness}

Consider a system $\mathcal{S}$ satisfying Assumptions~\ref{as:Lipschitz} and~\ref{as:ieiss}. Let $\mathcal{Y}_t$ and $\mathcal{Y'}_t$ be two output trajectories of this system produced by driving the system, respectively, with noise $(\mathcal{W}_t,\mathcal{V}_t)$ and $(\mathcal{W}'_t,\mathcal{V}'_t)$, where $(\mathcal{W}_t,\mathcal{V}_t)$ and $(\mathcal{W}'_t,\mathcal{V}'_t)$ differ only at a time instance $\tau < t$ (\textit{viz.}, $\mathbf{w}_i = \mathbf{w}'_i$ and $\mathbf{v}_i = \mathbf{v}'_i$ for all $i\neq\tau$). As a result, for time steps $i < \tau$, we have $\mathbf{y}_i = \mathbf{y}'_i$, with the outputs differing afterwards. The following theorem establishes the robustness of SSMs to this type of deviations in their input.

\begin{theorem}[SSM Robustness to Perturbation]
\label{th:one} Consider a stable $\mathrm{SSM}\in\mathcal{A}$, \textit{i.e.}, such that the eigenvalues of the continuous-time diagonal $\mathbf{A}$ matrix are strictly negative, applied to predicting the output of a system $\mathcal{S}$.
Let $\mathcal{Y}_t$ and $\mathcal{Y}'_t$ be as previously described. 
Define the event $\mathcal{B} \triangleq \bigcap_{i=0}^t \{\|\mathbf{w}_i\| \leq \Bar{w}, \|\mathbf{v}_i\|\leq\Bar{v}\}$, $\Bar{w}, \Bar{v} \in \mathbb{R}_+$ and write $\mathbb{E}'[\cdot] \triangleq \mathbb{E}[\cdot |\mathcal{B}]$. Define the constants $\bar{y} \triangleq L_g L_\rho\bar{w} + \bar{v}$, where $L_\rho \triangleq \frac{C_\rho}{1-\rho}$, and $\Tilde{y} \triangleq L_gC_\rho\bar{w} + \bar{v}$. Define $\alpha\triangleq\|\exp(\mathbf{A})\|_2^{\mathrm{softplus}(p-\|\mathbf{q}\|_2\bar{y})} < 1$. Let the loss function $\ell(\cdot,\cdot)$ be $L_\ell$-Lipschitz continuous. Then, there exists $K_\mathrm{SSM}\in\mathbb{R}_+$ such that for all $t$,  $\tau < t$, and $\{\mathbf{w}_{0:t-1}$, $\mathbf{v}_{0:t-1}\}$, $\{\mathbf{w}'_\tau$, $\mathbf{v}'_\tau\}$,
\begin{multline}\label{eq:theorem1}
    \mathbb{E}'_{\mathbf{w}_{t}, \mathbf{v}_{t}}\left[ |\ell(\mathbf{y}_t, \hat{\mathbf{y}}_t) - \ell(\mathbf{y}'_t, \hat{\mathbf{y}}'_t)|\right] \\\leq 2L_\ell K_\mathrm{SSM}(t-\tau)^5(\max\{\alpha, \rho\})^{(t-\tau)} \Tilde{y}.
\end{multline}
\end{theorem}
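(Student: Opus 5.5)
We begin by reducing the loss difference to an output difference. Since $\ell$ is $L_\ell$-Lipschitz,
\[
|\ell(\mathbf{y}_t,\hat{\mathbf{y}}_t)-\ell(\mathbf{y}'_t,\hat{\mathbf{y}}'_t)| \le L_\ell\big(\|\mathbf{y}_t-\mathbf{y}'_t\| + \|\hat{\mathbf{y}}_t-\hat{\mathbf{y}}'_t\|\big),
\]
so it suffices to bound each term by $K_\mathrm{SSM}(t-\tau)^5(\max\{\alpha,\rho\})^{t-\tau}\tilde y$. This accounts for the factor $2$ in \eqref{eq:theorem1}.

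\emph{System side.} Because the noise sequences differ only at $\tau$, Assumption~\ref{as:ieiss} applied from time $\tau$ gives
\[
\|\mathbf{x}_i-\mathbf{x}'_i\|\le C_\rho\rho^{i-\tau}\|\mathbf{w}_\tau-\mathbf{w}'_\tau\|.
\]
Combining this with Assumption~\ref{as:Lipschitz} and the event $\mathcal{B}$ yields
\[
\|\mathbf{y}_i-\mathbf{y}'_i\|\lesssim (L_gC_\rho\bar w+\bar v)\rho^{i-\tau}=\tilde y\,\rho^{i-\tau}.
\]
(The factor $2$ from the difference of two bounded noises can be absorbed into the constant.) A uniform bound on outputs follows the same way: compare with the zero trajectory using Assumption~\ref{as:fandgassumptions}, and sum the geometric series $\sum_j C_\rho\rho^j\bar w$. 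This gives $\|\mathbf{x}_i\|\le L_\rho\bar w$, hence $\|\mathbf{y}_i\|\le \bar y$. This bound makes every $\Delta_i\ge \mathrm{softplus}(p-\|\mathbf{q}\|\bar y)$. Since $\mathbf{A}$ is diagonal with negative entries, $\|\mathbf{A}_\Delta^k\|\le \alpha^k$ with $\alpha<1$. Also $\Delta_i\le \mathrm{softplus}(p+\|\mathbf{q}\|\bar y)$, which is bounded.

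\emph{SSM side.} Write \eqref{eq:ps5} as a product of factors, each of which depends on the inputs:
\[
\hat{\mathbf{y}}_t=C(\mathbf{y}_{t-1})\sum_{i=0}^{t-1}\mathbf{A}_\Delta^{t-1-i}\,\Delta_i\,B(\mathbf{y}_i)\,\mathbf{y}_i .
\]
Then telescope the difference $\hat{\mathbf{y}}_t-\hat{\mathbf{y}}'_t$ one factor at a time.
\begin{enumerate}
\item The change in $C$ contributes at most $\|\mathbf{W}_C\|\,\|\mathbf{y}_{t-1}-\mathbf{y}'_{t-1}\|$ times a sum that is bounded uniformly by $\sum_k\alpha^k\cdot\text{const}$. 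This term is $\lesssim\rho^{t-1-\tau}\tilde y$.
\item For $i<\tau$, the terms $\Delta_iB(\mathbf{y}_i)\mathbf{y}_i$ agree. However, $\mathbf{A}_\Delta^{t-1-i}$ contains $\Delta_\tau,\dots,\Delta_{t-1}$, which differ. Use $|e^{a}-e^{b}|\le e^{\max(a,b)}|a-b|$ on each diagonal entry, together with the fact that softplus is $1$-Lipschitz, so $|\Delta_j-\Delta'_j|\le\|\mathbf{q}\|\|\mathbf{y}_j-\mathbf{y}'_j\|$. This bounds the difference by $\|\mathbf{A}\|\alpha^{t-1-i}\sum_{j\ge\tau}\|\mathbf{q}\|\tilde y\rho^{j-\tau}$. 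Summing over $i<\tau$ gives a geometric tail $\alpha^{t-\tau}$ times constants.
\item For $i\ge\tau$, both the transition matrix and the injected term $\Delta_iB(\mathbf{y}_i)\mathbf{y}_i$ change. The injected term is locally Lipschitz in $\mathbf{y}_i$, with constant polynomial in $\bar y$, so its change is $\lesssim\tilde y\rho^{i-\tau}$. Combined with $\alpha^{t-1-i}$, we get the convolution
\[
\sum_{i=\tau}^{t-1}\alpha^{t-1-i}\rho^{i-\tau}\le (t-\tau)\max\{\alpha,\rho\}^{t-1-\tau}.
\]
The cross terms, where the transition-matrix perturbation hits indices $j\in[i,t-1]$, similarly produce nested sums. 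Each nested sum adds a factor $(t-\tau)$.
\end{enumerate}

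Collecting these terms gives a bound of the form $\mathrm{poly}(t-\tau)\max\{\alpha,\rho\}^{t-\tau}\tilde y$, with all constants depending on $\|\mathbf{A}\|,\|\mathbf{W}_B\|,\|\mathbf{W}_C\|,p,\|\mathbf{q}\|,\bar y,L_g,C_\rho$. Each factor of $(t-\tau)$ arises from a sum of $\le t-\tau$ terms in one nesting level, and there are at most five nesting levels (over $i$, over $j$ in $\mathbf{A}_\Delta$, and the products hidden in $B,C,\Delta$). Padding by $(t-\tau)\ge1$ then gives the stated $(t-\tau)^5$. The off-by-one $\max\{\alpha,\rho\}^{-1}$ is absorbed into $K_\mathrm{SSM}$. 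Finally, everything holds pointwise on $\mathcal{B}$, so taking $\mathbb{E}'$ over $\mathbf{w}_t,\mathbf{v}_t$ preserves the bound. Note that $\mathbf{y}_t-\mathbf{y}'_t$ does not involve $\mathbf{v}_t$ differences, since $\mathbf{v}_t$ is shared between the two trajectories.

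\emph{Main obstacle.} The hardest step is the perturbation of the time-varying transition $\mathbf{A}_\Delta^{t-1-i}$ for $i<\tau$. The sum over all of the past could accumulate without decay unless we keep $\alpha^{t-1-i}$ intact while bounding the exponent difference. I would handle this by first factoring
\[
\mathbf{A}_\Delta^{t-1-i}=\mathbf{A}_\Delta^{[\tau,t-1]}\,\mathbf{A}_\Delta^{[i,\tau-1]},
\]
where the second factor is common to both trajectories and bounded by $\alpha^{\tau-1-i}$. The perturbation then only enters the first factor, which is bounded by $\alpha^{t-\tau}\cdot\|\mathbf{A}\|\|\mathbf{q}\|\tilde y L_\rho$. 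The sum over $i<\tau$ of the common factor is at most $1/(1-\alpha)$, so no unbounded accumulation arises.
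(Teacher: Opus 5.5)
Your proposal is correct, and it differs from the paper's proof in two substantive ways.

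\emph{Output bounds.} You bound the outputs uniformly, $\|\mathbf{y}_i\|_2\le\bar y$, by switching the noises on one at a time and summing $C_\rho\rho^{i-k}\bar w$ over $k$. Do write this telescoping out, because comparing directly with the zero trajectory through Assumption~\ref{as:ieiss} only gives a linearly growing bound. The paper works with exactly such growing bounds: $R_\mathbf{y}(t-\tau)=L_gK\bar w(t-\tau)+\bar v$ and an affine $\bar\Delta(t-\tau)$. Its $(t-\tau)^5$ comes from prefactors like $\bar\Delta R_\mathbf{y}^3$ multiplied by a convolution sum. Yet to obtain $\alpha$, the paper lower-bounds every $\Delta_j$ by $\mathrm{softplus}(p-\|\mathbf{q}\|_2\bar y)$, a step that needs precisely your uniform bound; with only $R_\mathbf{y}$, the guaranteed per-step contraction would shrink as $t-\tau$ grows. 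So the paper's route uses only the $K$-term of Assumption~\ref{as:ieiss} and thereby explains the specific exponent $5$, but it is inconsistent with the $\bar y$ and $\alpha$ in the statement. Your route uses the $C_\rho\rho^t$ term stepwise and turns every prefactor into a constant. Your true polynomial degree is therefore one or two, and $(t-\tau)^5$ follows by padding, which is legitimate since $t-\tau\ge 1$. The ``five nesting levels'' justification is not meaningful, since the products inside $B$, $C$, $\Delta$ contribute constants rather than sums, but nothing depends on it.

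\emph{Terms with $i<\tau$.} The paper discards all of them on the grounds that they subtract to zero, keeping only $i\ge\tau$ in $\mathbf{N}_t$. That is not literally true: for $i<\tau$, $\mathbf{A}_\Delta^{t-1-i}$ contains $\Delta_\tau,\dots,\Delta_{t-1}$, and $\mathbf{M}_t-\mathbf{M}'_t$ multiplies the whole sum. Your items 1 and 2, together with the factorization of $\mathbf{A}_\Delta^{t-1-i}$ that the diagonal $\mathbf{A}$ permits, account for exactly these contributions. They show the contributions are $O(\rho^{t-1-\tau})$ and $O(\alpha^{t-\tau})$, so the rate is unchanged.

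The factor $2$ is also attributed differently. The paper gets it from $\|\mathbf{w}_\tau-\mathbf{w}'_\tau\|_2\le 2\bar w$ and $\|\mathbf{v}_\tau-\mathbf{v}'_\tau\|_2\le 2\bar v$; you get it from the two Lipschitz terms. Both are fine, since $K_\mathrm{SSM}$ is existential.
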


\begin{proof} We start by invoking the Lipschitzness of the loss function as
\begin{equation}
    \label{eq:loss-lip}
    |\ell(\mathbf{y}_t, \hat{\mathbf{y}}_t) - \ell(\mathbf{y}'_t, \hat{\mathbf{y}}'_t)| \leq L_\ell\|\mathbf{y}_t - \mathbf{y}'_t\|_2 + L_\ell\|\hat{\mathbf{y}}_t - \hat{\mathbf{y}}'_t\|_2,
\end{equation}
before proceeding to bound each of the two right-hand terms individually.
\allowdisplaybreaks
\\

\noindent\textbf{Step 1: Bound on $\|\hat{\mathbf{y}}_t - \hat{\mathbf{y}}'_t\|_2$:} From \eqref{eq:ps5}, we first upper bound $\|\hat{\mathbf{y}}_t - \hat{\mathbf{y}}'_t\|_2$. Clearly, all terms with index less than $\tau$ must subtract to zero. Therefore, after pruning zero terms from the summation, we have

\begin{multline}
    \|\hat{\mathbf{y}}_t - \hat{\mathbf{y}}'_t\|_2 \\ = \left\|\left(\mathbf{I} \otimes\mathbf{y}_{t-1}^\intercal \mathbf{W}_\mathbf{C}^\intercal\right)\sum_{i=\tau}^{t-1}\left[\mathbf{A}_\Delta^{t-1-i} \Delta_i\left(\mathbf{I} \otimes \mathbf{W}_\mathbf{B} \mathbf{y}_i \right)\mathbf{y}_i \right]\right. \\ \left. - \left(\mathbf{I} \otimes\mathbf{y'}_{t-1}^\intercal \mathbf{W}_\mathbf{C}^\intercal\right)\sum_{i=\tau}^{t-1}\left[\mathbf{A}_{\Delta'}^{t-1-i} \Delta'_i\left(\mathbf{I} \otimes \mathbf{W}_\mathbf{B} \mathbf{y}'_i \right)\mathbf{y}'_i \right] \right\|_2. \nonumber
\end{multline}

Let us define the following shorthands: $\mathbf{M}_t \triangleq \mathbf{I} \otimes\mathbf{y}_{t-1}^\intercal \mathbf{W}_\mathbf{C}^\intercal$; $\mathbf{N}_t \triangleq \sum_{i=\tau}^{t-1}\left[\mathbf{A}_\Delta^{t-1-i} \Delta_i\left(\mathbf{I} \otimes \mathbf{W}_\mathbf{B} \mathbf{y}_i \right)\mathbf{y}_i \right]$. Then, $\mathbf{M}_t\mathbf{N}_t - \mathbf{M}'_t\mathbf{N}'_t = (\mathbf{M}_t - \mathbf{M}'_t)\mathbf{N}_t + \mathbf{M}'_t(\mathbf{N}_t - \mathbf{N}'_t)$. By the triangle inequality and submultiplicativity,
\begin{multline}
    \label{eq:sr2}
    \|\hat{\mathbf{y}}_t - \hat{\mathbf{y}}'_t\|_2 \leq \|\mathbf{M}_t - \mathbf{M}'_t\|_2\|\mathbf{N}_t\|_2 + \|\mathbf{M}'_t\|_2\|\mathbf{N}_t - \mathbf{N}'_t\|_2.
\end{multline}

Let us further define $\mathbf{G}_{t, i} \triangleq \mathbf{A}_\Delta^{t-1-i}$ and $\mathbf{H}_{t, i} \triangleq \Delta_i\left(\mathbf{I} \otimes \mathbf{W}_\mathbf{B} \mathbf{y}_i\right)$. Then,
\begin{align*}
    \|\mathbf{N}_t - \mathbf{N}'_t\|_2 &\leq \sum_{i=\tau}^{t-1}(\|\mathbf{G}_{t,i} - \mathbf{G}'_{t,i}\|_2\|\mathbf{H}_{t,i}\|_2\|\mathbf{y}_{i}\|_2 \\&+ \|\mathbf{G}'_{t,i}\|_2\|\mathbf{H}_{t,i} - \mathbf{H}'_{t,i}\|_2\|\mathbf{y}_{i}\|_2 \\&+ \|\mathbf{G}'_{t,i}\|_2\|\mathbf{H}'_{t,i}\|_2\|\mathbf{y}_{i} - \mathbf{y}'_{i}\|_2). \nonumber
\end{align*}

Since we want to eventually condition on the event $\mathcal{B}$, we bound the right-hand side of~\eqref{eq:sr2}.
Assume the magnitudes of the process and measurement noise $\mathbf{w}_i$, $\mathbf{v}_i$ are bounded by, respectively, $\Bar{w}$ and $\Bar{v}$. 

From Assumptions~\ref{as:Lipschitz} and~\ref{as:ieiss}, we see that $\|\mathbf{y}_t - \mathbf{y}'_t\|_2 \leq L_g(C_\rho\rho^{t-\tau}\|\mathbf{x}_\tau - \mathbf{x}'_\tau\|_2 + K\|\mathcal{W}_{\tau+1:t} - \mathcal{W}'_{\tau+1:t}\|_{2, 1}) + \|\mathbf{v}_{t} - \mathbf{v}'_{t}\|_{2}$. Moreover, comparing an arbitrary trajectory to a zero-noise trajectory with $\mathbf{x}_0 = \mathbf{0}$, which corresponds to an equilibrium by Assumption~\ref{as:fandgassumptions}, we obtain the bound $\|\mathbf{y}_i\|_2 \leq L_gK\Bar{w} (i-\tau) + \Bar{v}$ and $\|\mathbf{y}'_i\|_2 \leq L_gK\Bar{w} (i-\tau) + \Bar{v}$ for $i\geq \tau$ under the event of bounded noise defined by Theorem~\ref{th:one}. Note that for $i = t$, this bound is maximized. Therefore, we define the function $R_\mathbf{y}(t-\tau) \triangleq L_gK\Bar{w} (t-\tau) + \Bar{v}$, with which we bound the terms in~\eqref{eq:sr2}.

\noindent\textbf{Step 1a: Bound on $\|\mathbf{M}_t - \mathbf{M}'_t\|_2\|\mathbf{N}_t\|_2$:} We begin by using the submultiplicativity of the norm to obtain
\begin{equation}
\begin{aligned}
    \label{eq:MMN1}
    \|\mathbf{M}_t - \mathbf{M}'_t\|_2\|\mathbf{N}_t\|_2 &\leq\|\mathbf{W}_\mathbf{C}\|_2\|\mathbf{W}_\mathbf{B}\|_2\|\mathbf{y}_{t-1} - \mathbf{y}'_{t-1}\|_2\\&\times\sum_{i=\tau}^{t-1}\|\Delta_i\|_2\|\mathbf{A}_\Delta^{t-1-i}\|_2\|\mathbf{y}_i\|_2^2.
\end{aligned}
\end{equation}

As $\Delta_i \triangleq \ln(1+\exp(p+\mathbf{q}^\intercal\mathbf{y}_i))$ and the magnitude of $\mathbf{y}_i$ is bounded by an affine function of $t$, there must exist an affine function of $t$, $\bar{\Delta}(t-\tau)$, such that $\bar{\Delta}(t-\tau) \geq \max_{\tau\leq i<t}\{\Delta_i, \Delta'_i\}$. Then, using $\bar{\Delta}(t-\tau)$ and $R_\mathbf{y}(t-\tau)$, we bound \eqref{eq:MMN1} by
\begin{align*}
    \|\mathbf{M}_t - \mathbf{M}'_t\|_2\|\mathbf{N}_t\|_2&\leq \bar{\Delta}(t-\tau)R^2_\mathbf{y}(t-\tau)\|\mathbf{W}_\mathbf{C}\|_2\|\mathbf{W}_\mathbf{B}\|_2\\&\times\left(\sum_{i=\tau}^{t-1}\|\mathbf{A}_\Delta^{t-1-i}\|_2\right)\|\mathbf{y}_{t-1} - \mathbf{y}'_{t-1}\|_2. \nonumber
\end{align*}

We must now bound the term $\|\mathbf{A}_\Delta^{t-1-i}\|_2 \triangleq \|\exp[(\Delta_{t-1} + \cdots + \Delta_{i+1})\mathbf{A}]\|_2 = \|\exp(\mathbf{A})\|_2^{\Delta_{t-1} + \cdots + \Delta_{i+1}}$. Since $\mathbf{A}$ is a real diagonal matrix with eigenvalues (diagonal elements) $\lambda_i$ strictly negative, the norm $\|\exp(\mathbf{A})\|_2$ is equivalent to $\exp(\lambda_m) < 1$, where $\lambda_m <0$ is the largest eigenvalue of $\mathbf{A}$.

Now, to upper bound $\exp(\lambda_m)^{\sum_{j=i+1}^{t-1}\Delta_{j}}$, we need to lower bound the sum $\sum_{j=i+1}^{t-1}\Delta_{j}$ of step sizes. As $\Delta_i = \mathrm{softplus}(p+\mathbf{q}^\intercal\mathbf{y}_i)$, we have 
\begin{align*}
    \sum_{j=i+1}^{t-1}\Delta_{j} &\geq \sum_{j=i+1}^{t-1}\mathrm{softplus}(p-\|\mathbf{q}\|_2\bar{y}) \\
    &= (t-1-i)\mathrm{softplus}(p-\|\mathbf{q}\|_2\bar{y}) > 0. \nonumber
\end{align*}
If we define $\alpha \triangleq \exp(\lambda_m)^{\mathrm{softplus}(p-\|\mathbf{q}\|_2\bar{y})} < 1$, we obtain $\|\mathbf{A}_\Delta^{t-1-i}\|_2 \leq \alpha^{t-1-i}$.

\noindent\textbf{Step 1b: Bound on $\|\mathbf{M}'_t\|_2\|\mathbf{N}_t - \mathbf{N}'_t\|_2$:} Start with the term $\|\mathbf{N}_t - \mathbf{N}'_t\|_2$. To bound this quantity, we must first bound its components. We examine first $\|\mathbf{G}_{t,i} - \mathbf{G}'_{t,i}\|_2$. Defining $s_{t, i} \triangleq \sum_{j=i+1}^{t-1}\Delta_{j}$ and $s'_{t, i} \triangleq \sum_{j=i+1}^{t-1}\Delta'_{j}$, we proceed by the fundamental theorem of calculus. Noting that $\mathbf{G}_{t,i} - \mathbf{G}'_{t,i} = \int_{s_{t, i}}^{s'_{t, i}} \mathbf{A}\exp(\mathbf{A}u) du$, the normed quantity is then bounded as
\begin{align*}
    \|\mathbf{G}_{t,i} - \mathbf{G}'_{t,i}\|_2 &\leq\int_{\min\{s_{t, i}, s'_{t, i}\}}^{\max\{s_{t, i}, s'_{t, i}\}}\|\mathbf{A}\exp(\mathbf{A}u)\|_2 du \\
    &\leq M_\mathbf{A}\alpha^{t-1-i}\|s_{t,i} - s'_{t,i}\|_2 \\
    &\leq M_\mathbf{A}\alpha^{t-1-i}\|\mathbf{q}\|_2\sum_{j=i+1}^{t-1}\|\mathbf{y}_j - \mathbf{y}'_j\|_2,
\end{align*}
where $M_\mathbf{A}$ is a positive constant of integration and the final inequality is due to the $1$-Lipschitzness of the softplus function. The other terms are straightforward as we have found bounds for them already: $\|\mathbf{G}'_{t,i}\|_2 \leq \alpha^{t-1-i}$; $\|\mathbf{H}_{t,i}\|_2, \|\mathbf{H}'_{t,i}\|_2 \leq \bar{\Delta}(t-\tau)R_\mathbf{y}(t-\tau)\|\mathbf{W}_\mathbf{B}\|_2$; $\|\mathbf{H}_{t,i} - \mathbf{H}'_{t,i}\|_2 \leq (R_\mathbf{y}(t-\tau)\|\mathbf{q}\|_2\|\mathbf{W}_\mathbf{B}\|_2 + \bar{\Delta}(t-\tau)\|\mathbf{W}_\mathbf{B}\|_2)\|\mathbf{y}_i - \mathbf{y}'_i\|_2$. Since $\|\mathbf{M}'_t\|_2 \leq R_\mathbf{y}(t-\tau)\|\mathbf{W}_\mathbf{C}\|_2$, putting it all together, we obtain
\begin{multline}
    \|\mathbf{M}'_t\|_2\|\mathbf{N}_t - \mathbf{N}'_t\|_2 \leq R^2_\mathbf{y}(t-\tau)\|\mathbf{W}_\mathbf{B}\|_2\|\mathbf{W}_\mathbf{C}\|_2\\\times \Bigg[ (R_\mathbf{y}(t-\tau)\|\mathbf{q}\|_2 + 2\bar{\Delta}(t-\tau))\sum_{i=\tau}^{t-1}\alpha^{t-1-i}\|\mathbf{y}_i - \mathbf{y}'_i\|_2 \\ +  \frac{\bar{\Delta}(t-\tau)R_\mathbf{y}(t-\tau)M_\mathbf{A}\|\mathbf{q}\|_2}{1-\alpha}\sum_{i=\tau+1}^{t-1}\alpha^{t-i}\|\mathbf{y}_i - \mathbf{y}'_i\|_2 \Bigg]. \nonumber
\end{multline}
\\

\noindent\textbf{Step 2: Loss function decay bound:} Let us return to the loss function bound in \eqref{eq:loss-lip}:
\begin{multline}
    \label{eq:sr8}
    |\ell(\mathbf{y}_t, \hat{\mathbf{y}}_t) - \ell(\mathbf{y}'_t, \hat{\mathbf{y}}'_t)|
    \leq L_\ell C_\rho\rho^{t-\tau}\|\mathbf{y}_\tau - \mathbf{y}'_\tau\|_2 \\+ L_\ell C_\rho\rho^{t-\tau-1}\bar{\Delta}(t-\tau)R^2_\mathbf{y}(t-\tau)\|\mathbf{W}_\mathbf{C}\|_2\|\mathbf{W}_\mathbf{B}\|_2\\\times\left(\sum_{i=\tau}^{t-1}\alpha^{t-1-i}\right)\|\mathbf{y}_\tau - \mathbf{y}'_\tau\|_2 +  R^2_\mathbf{y}(t-\tau)\|\mathbf{W}_\mathbf{B}\|_2\|\mathbf{W}_\mathbf{C}\|_2 \\\times \Bigg[(R_\mathbf{y}(t-\tau)\|\mathbf{q}\|_2 + 2\bar{\Delta}(t-\tau))\sum_{i=\tau}^{t-1}\alpha^{t-1-i}\rho^{i-\tau}  \\ +  \bar{\Delta}(t-\tau)R_\mathbf{y}(t-\tau)M_\mathbf{A}\|\mathbf{q}\|_2 \sum_{i=\tau+1}^{t-1}\alpha^{t-i}\rho^{i-\tau-1} \Bigg] \\\times\|\mathbf{y}_\tau - \mathbf{y}'_\tau\|_2,
\end{multline}
where the first term comes from applying Assumption~\ref{as:ieiss}
to substitute for $\|\mathbf{y}_t - \mathbf{y}'_t\|_2$ and the second and third terms come from the $\|\hat{\mathbf{y}}_t - \hat{\mathbf{y}}'_t\|_2$ bound derived above.

Set $S_{\alpha}(t-\tau)$ to be the summation $\sum_{i=\tau}^{t-1} \alpha^{t-1-i} = (1-\alpha^{t-\tau})/(1-\alpha)$. Additionally, define the discrete convolution functions $J_{t-\tau}(\alpha, \rho) \triangleq \sum_{i=\tau}^{t-1}\rho^{i-\tau}\alpha^{t-1-i}$ and $K_{t-\tau}(\alpha, \rho) \triangleq \sum_{i=\tau+1}^{t-1}\alpha^{t-i}\rho^{i-\tau-1}$

We now show that the quantity defined by
\begin{multline}
    \label{eq:sr11}
    \sup_{(t-\tau)\in\mathbb{R}_+} \rho^{t-\tau} \\+ \bar{\Delta}(t-\tau)R^2_\mathbf{y}(t-\tau)\|\mathbf{W}_\mathbf{C}\|_2\|\mathbf{W}_\mathbf{B}\|_2S_{\alpha}(t-\tau)\rho^{t-1-\tau} \\+ R^2_\mathbf{y}(t-\tau)\|\mathbf{W}_\mathbf{B}\|_2\|\mathbf{W}_\mathbf{C}\|_2 \\\times \Big[(R_\mathbf{y}(t-\tau)\|\mathbf{q}\|_2 + 2\bar{\Delta}(t-\tau)) K_{t-\tau}(\alpha, \rho) \\ + \bar{\Delta}(t-\tau)R_\mathbf{y}(t-\tau)M_\mathbf{A}\|\mathbf{q}\|_2 J_{t-\tau}(\alpha, \rho)/(1-\alpha)\Big]
\end{multline}
is dominated by a function $\frac{K_\mathrm{SSM}}{C}(t-\tau)^5\rho^{(t-\tau)}$, where $0 \leq K_\mathrm{SSM}/C_\rho < \infty$. Consider two cases. First, if $\alpha\neq\rho$, then
\begin{align*}
    J_{t-\tau}(\alpha, \rho) &= \alpha\frac{\alpha^{t-1-\tau} - \rho^{t-1-\tau}}{\alpha - \rho} \\
    K_{t-\tau}(\alpha, \rho) &= \frac{\alpha^{t-\tau} - \rho^{t-\tau}}{\alpha - \rho}. 
\end{align*}
Also, in the case where $\alpha = \rho$, we have
\begin{align*}
    J_{t-\tau}(\alpha, \rho) &= (t-1-\tau)\rho^{t-1-\tau}\\
    K_{t-\tau}(\alpha, \rho) &= (t-\tau)\rho^{t-1-\tau}.
\end{align*}

To leading order, the products $\bar{\Delta}(t-\tau)R^2_\mathbf{y}(t-\tau)$ and $R_\mathbf{y}^3(t)$ grow as $(t-\tau)^3$, whereas $\bar{\Delta}(t-\tau)R^3_\mathbf{y}(t-\tau)$ grows as $(t-\tau)^4$. Further, $S_\alpha(t-\tau)$ is a constant at leading order. Therefore, the worst-case for the supremum in \eqref{eq:sr11} decays for $(t-\tau) \gtrsim -5/\ln(\max\{\alpha, \rho\})$ at the rate $(t-\tau)^5\max\{\alpha, \rho\}^{(t-\tau)}$ with some constant prefactor, which we set as $\frac{K_\mathrm{SSM}}{C_\rho}$.
\\

\noindent\textbf{Step 3: Bounding the expectation:} Applying this to \eqref{eq:sr8} and taking the expectation on both sides now yields
\begin{multline}
    \mathbb{E}'_{\mathbf{w}_{t}, \mathbf{v}_{t}}[|\ell(\mathbf{y}_t, \hat{\mathbf{y}}_t) - \ell(\mathbf{y}'_t, \hat{\mathbf{y}}'_t)|] \\\leq L_\ell K_\mathrm{SSM}(t-\tau)^5\rho^{(t-\tau)} \mathbb{E}'_{\mathbf{w}_{t}, \mathbf{v}_{t}}\left[\|\mathbf{y}_\tau - \mathbf{y}'_\tau\|_2\right]. \nonumber
\end{multline}

All that remains to be done is to bound the expectation on the right-hand side. By Assumptions~\ref{as:Lipschitz} and~\ref{as:ieiss}, we write $\|\mathbf{y}_\tau - \mathbf{y}'_\tau\|_2 \leq L_gC_\rho\|\mathbf{w}_\tau - \mathbf{w}'_\tau\|_2 + \|\mathbf{v}_\tau - \mathbf{v}'_\tau\|_2$. Taking the conditional expectation then gives $\mathbb{E}'_{\mathbf{w}_{t}, \mathbf{v}_{t}}[\|\mathbf{y}_\tau - \mathbf{y}'_\tau\|_2] \leq 2(L_gC_\rho\bar{w} + \bar{v}) = 2\Tilde{y}$. This fully justifies Theorem~\ref{th:one}.
\end{proof}

\subsection{Filtering Performance Guarantee}
Recall that $\mathcal{A}$ is the space of all trainable SSMs. We are interested in the performance of SSM models in predicting the outputs of the target system $\mathcal{S}_0$ over some time horizon $T \in \mathbb{N}_{++}$. To evaluate this performance, we define the expected risk function
\begin{align}
    \label{eq:fpg1}
    \mathcal{L}(\mathrm{SSM}) \triangleq \frac{1}{T}\sum_{t=1}^T\mathbb{E}\left[\ell(\mathbf{y}^{\mathcal{S}_0}_t, \hat{\mathbf{y}}^{\mathcal{S}_0}_t)\right].
\end{align}

Further, $\mathrm{SSM}^\star \in \mathcal{A}$ represents the hypothetical optimal SSM which minimizes \eqref{eq:fpg1}, \textit{viz.}, $\mathrm{SSM}^\star \triangleq \arg\min_{\mathrm{SSM}\in\mathcal{A}} \mathcal{L}(\mathrm{SSM})$. Define $\widehat{\mathrm{SSM}}$ to be a particular SSM trained on the set of systems $\{\mathcal{S}_i\}_{i=1}^M$ by minimizing the objective function \eqref{eq:ps6}. Now, to evaluate the performance of $\widehat{\mathrm{SSM}}$ \textit{vis-à-vis} $\mathrm{SSM}^\star$, we define the following excess risk function $\mathcal{R}$:
\begin{align}
    \mathcal{R}(\widehat{\mathrm{SSM}}) \triangleq \mathcal{L}(\widehat{\mathrm{SSM}}) - \mathcal{L}(\mathrm{SSM}^\star). \nonumber
\end{align}

Further, we define two pieces of machinery.

\textit{Definition 1 (Distance Metric)}: For any two SSMs in the space $\mathcal{A}$, define the distance metric $\mu(\mathrm{SSM}, \mathrm{SSM}')$ as
\begin{align}
    \mu(\mathrm{SSM}, \mathrm{SSM}') \triangleq \sup_{t\leq T}\sup_{\mathbf{w}_{0:t}, \mathbf{v}_{0:t}}\frac{\|\hat{\mathbf{y}}_t - \hat{\mathbf{y}}'_t\|}{\max_{i< t}\|\mathbf{w}_i\| + \max_{i< t}\|\mathbf{v}_i\|}. \nonumber
\end{align}

\textit{Definition 2 (Covering Number)}: A set $\Bar{\mathcal{A}}_N \triangleq \{\overline{\mathrm{SSM}}_1, \ldots, \overline{\mathrm{SSM}}_N\}$ is an $\varepsilon$-covering of the set of SSMs $\mathcal{A}$ equipped with the distance metric $\mu$ if for all $ \mathrm{SSM}\in\mathcal{A}$, there exists $\overline{\mathrm{SSM}}_i\in\Bar{\mathcal{A}}_N$ such that $\mu(\mathrm{SSM}, \overline{\mathrm{SSM}}_i) \leq \varepsilon$. The covering number $\mathcal{E}(\mathcal{A}, \mu, \varepsilon)$ is the minimum $N\in\mathbb{N}$ such that the $\varepsilon$-covering set $\Bar{\mathcal{A}}_N$ exists.

\begin{theorem}
\label{th:two}
Assume the dynamical system obeys Assumptions~\ref{as:Lipschitz} and~\ref{as:ieiss}, the loss function $\ell(\mathbf{y}_t, \hat{\mathbf{y}}_t)$ is $L_\ell$ Lipschitz and $\ell(\mathbf{y}_t, \cdot)\leq B \in \mathbb{R}_+$. Additionally, Theorem~\ref{th:one} holds with constant $K_\mathrm{SSM}$. Then, when $MT \geq 3\max(\sqrt{n}, \sqrt{m})$, where $n=\dim(\mathbf{x})$ and $m=\dim(\mathbf{y})$, $\forall\varepsilon\in\mathbb{R}_{++}$, for some constant $c$, and with success probability $1-\delta$ or greater,
\begin{align}
    \label{eq:fpg4}
    \mathcal{R}(\widehat{\mathrm{SSM}}) \leq 12B\delta + 4L_\ell\varepsilon + \Bar{B}\sqrt{\frac{\log(4\mathcal{E}(\mathcal{A}, \mu, \varepsilon'))-\log\delta}{cMT}}.
\end{align}

$\Bar{B} = 2B + 7K_{\mathrm{SSM}}L_\ell(L_g C_\rho\sigma_\mathbf{w} + \sigma_\mathbf{v})\sqrt{\log(4MT/\delta)}\frac{T^4\rho^T}{\log\rho}$ and $\varepsilon' = \varepsilon/\left((\sigma_\mathbf{w} + \sigma_\mathbf{v})\sqrt{\log(4MT)-\log\delta}\right)$ are additional derivative constants. Furthermore, $\log(\mathcal{E}(\mathcal{A}, \mu, \varepsilon')) \leq n_\Theta\log(|\Theta|\sqrt{n_\Theta}\varepsilon')$, where $n_\Theta$ represents the number of individual trainable parameters in $\Theta$ and $|\Theta|=5$ the number of parameter blocks.
\end{theorem}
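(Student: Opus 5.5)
The proof plugs Theorem~\ref{th:one} into the generalization argument of \cite{ZheTransformers2023}, which derives this kind of bound for any sequence model once a robustness decay is known. Start from the standard ERM decomposition
\[
\mathcal{R}(\widehat{\mathrm{SSM}}) \leq 2\sup_{\mathrm{SSM}\in\mathcal{A}}\bigl|\mathcal{L}(\mathrm{SSM})-\widehat{\mathcal{L}}(\mathrm{SSM})\bigr|,
\]
where $\widehat{\mathcal{L}}$ is the empirical objective in \eqref{eq:ps6}. Then control the supremum in two stages: concentration for each fixed model, and a union bound over an $\varepsilon'$-cover.

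\textbf{Step 1 (truncation).} Restrict to the event $\mathcal{B}$ on which all $MT$ noise vectors are bounded. Gaussian tail bounds give thresholds $\bar w=\sigma_\mathbf{w}\sqrt{2\log(4MT/\delta)}$ and $\bar v$ analogously; this is where the dimension condition $MT\geq 3\max(\sqrt n,\sqrt m)$ lets the dimension-dependent terms be absorbed. Then $\Pr(\mathcal{B}^c)\leq\delta/4$. Since $\ell\leq B$, the difference between conditional and unconditional risks costs at most a constant times $B\delta$; accounting for both $\widehat{\mathrm{SSM}}$ and $\mathrm{SSM}^\star$ and both the empirical and expected sides produces the $12B\delta$ term.

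\textbf{Step 2 (concentration for a fixed model).} For fixed $\mathrm{SSM}$, the empirical loss over $M$ i.i.d.\ systems is a function of the independent noise variables. The $M$ systems are independent, so the only dependence to handle is along time within a trajectory. Apply McDiarmid's bounded-difference inequality to the noise variables, conditioned on $\mathcal{B}$. Changing the noise at time $\tau$ alters the loss at each $t>\tau$ by at most $2L_\ell K_\mathrm{SSM}(t-\tau)^5\max\{\alpha,\rho\}^{t-\tau}\tilde y$, by Theorem~\ref{th:one}. Summing over $t$, a sum of polynomial-times-geometric terms, gives a bounded-difference constant of order $K_\mathrm{SSM}L_\ell\tilde y\,T^4\rho^T/\log\rho$ (up to the sign convention on $\log\rho$). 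Add the $2B$ contribution from the system-level variable. With $\tilde y\lesssim (L_gC_\rho\sigma_\mathbf{w}+\sigma_\mathbf{v})\sqrt{\log(4MT/\delta)}$, this yields $\bar B$ and a deviation bound $\bar B\sqrt{\log(1/\delta)/(cMT)}$.

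\textbf{Step 3 (covering).} Take an $\varepsilon'$-cover under $\mu$ and union-bound Step 2 over its elements, which gives the $\log(4\mathcal{E})$ term. For an arbitrary model, compare it to its nearest cover element. On $\mathcal{B}$, Definition~1 gives $\|\hat{\mathbf{y}}_t-\hat{\mathbf{y}}'_t\|\leq\varepsilon'(\bar w+\bar v)\approx\varepsilon$ by the choice of $\varepsilon'$. Lipschitzness of $\ell$ then costs $L_\ell\varepsilon$ on each of the empirical and expected sides, for each of the two models, giving $4L_\ell\varepsilon$.

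\textbf{Step 4 (covering number).} Bound $\log\mathcal{E}$ by a parameter-space cover. Cover each of the $|\Theta|=5$ bounded blocks in Euclidean norm, then argue $\mu$ is Lipschitz in the parameters on bounded-noise trajectories, using the same perturbation bounds as Steps~1a--1b of Theorem~\ref{th:one}, now perturbing parameters instead of inputs. A standard volumetric count then gives $n_\Theta\log(|\Theta|\sqrt{n_\Theta}/\varepsilon')$ up to constants.

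The main obstacle is Step~2. The bounded-difference constants must be uniform over the model class, which requires $\alpha<1$ uniformly and hence a margin on the stability of $\mathbf{A}$ and on $p-\|\mathbf{q}\|\bar y$. Theorem~\ref{th:one} is stated only in conditional expectation over the last step, so I would try the martingale (Azuma-type) variant used in \cite{ZheTransformers2023}, whose required input matches exactly that form.
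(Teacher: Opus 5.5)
Your outline follows the same route as the paper. The paper's proof simply defers to \cite{ZheTransformers2023} and says Theorem~\ref{th:one} changes only the martingale-difference bound, i.e.\ your Step~2. The paper asserts that step rather than computing it, and it does not come out as you claim.

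With $\gamma\triangleq\max\{\alpha,\rho\}$, Theorem~\ref{th:one} bounds the effect of the time-$\tau$ noise on later losses by $2L_\ell K_\mathrm{SSM}\tilde{y}\sum_{k=1}^{T-\tau}k^5\gamma^k$. For every $\tau<T$ this sum is at least its first term $\gamma$ and at most $\sum_{k\geq 1}k^5\gamma^k\leq 120\gamma/(1-\gamma)^6$, so it is $\Theta(1)$ in $T$. It is not of order $T^4\rho^T/\log\rho$. That quantity is negative and tends to zero, so no sign convention makes it an upper bound. It looks like the upper-limit term of an antiderivative of $t^4\rho^t$ with the lower-limit constant dropped.

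You also need the same-time loss $\ell(\mathbf{y}_\tau,\hat{\mathbf{y}}_\tau)$. Theorem~\ref{th:one} does not cover it, since it requires $\tau<t$, and it is where a $2B$-type term naturally arises. With it, the argument supports $\bar{B}=2B+c'K_\mathrm{SSM}L_\ell(L_gC_\rho\sigma_\mathbf{w}+\sigma_\mathbf{v})\sqrt{\log(4MT/\delta)}\,\gamma/(1-\gamma)^6$ for an absolute constant $c'$. That is a $1/\sqrt{MT}$ rate. It drops the $\log T$ factor that the transformer's $\mathcal{O}(1/t)$ robustness incurs, but it has no exponential decay in $T$.

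Two related problems sit in the same step. Replacing $\gamma$ by $\rho$ is unjustified. And the uniform margin you ask for cannot simply be assumed. Your truncation makes $\bar{y}$ grow like $\sqrt{\log(MT/\delta)}$, so $\mathrm{softplus}(p-\|\mathbf{q}\|_2\bar{y})\to 0$ and $\alpha\to 1$ unless $\mathbf{q}=\mathbf{0}$. Both $K_\mathrm{SSM}$ and $(1-\gamma)^{-6}$ therefore become functions of $M$, $T$, $\delta$ that must be tracked.

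Two further steps fail as written. First, in Step~2 the system draws cannot be charged at rate $1/\sqrt{MT}$. Replacing $\mathcal{S}_k$ changes all $T$ losses of trajectory $k$, hence the empirical objective by up to $B/M$. So the $M$ system variables contribute a deviation of order $B\sqrt{\log(1/\delta)/M}$, which does not shrink with $T$. The stated form needs the risk conditional on the training systems $\mathcal{S}_1,\dots,\mathcal{S}_M$. If $\mathcal{L}$ is the risk on a fresh $\mathcal{S}_0\sim\mathcal{D}_\mathrm{sys}$, a term of this order is needed in general.

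Second, in Step~4, Lipschitzness in the parameters on bounded-noise trajectories does not bound $\mu$. Definition~1 takes the supremum over all noise with only a linear normalization. Set every noise to zero except $\mathbf{v}_{t-1}=s\mathbf{u}$. For any system satisfying Assumption~\ref{as:fandgassumptions} this gives $\hat{\mathbf{y}}_t=\Delta_{t-1}s^3(\mathbf{u}^\intercal\mathbf{W}_\mathbf{C}^\intercal\mathbf{W}_\mathbf{B}\mathbf{u})\mathbf{u}$, with $\Delta_{t-1}\geq\mathrm{softplus}(p)$ whenever $\mathbf{q}^\intercal\mathbf{u}\geq 0$. For two SSMs differing only in $\mathbf{W}_\mathbf{C}$, the ratio in $\mu$ then grows at least like $s^2$, so $\mu$ is generically infinite. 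You need to restrict the supremum in $\mu$ to $\mathcal{B}$, which is all Step~3 uses. The parameter-Lipschitz constant then depends on $\bar{y}$ and on the parameter radius, and it enters the covering bound. That bound should carry $1/\varepsilon'$, as you wrote, rather than $\varepsilon'$.
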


\begin{proof}
The proof of this bound is identical to that of \cite{ZheTransformers2023}, Theorem~\ref{th:one}, except for one modification. We see that \eqref{eq:theorem1} has a similar form as the analogous bound for transformers (\cite{ZheTransformers2023}, Assumption 2) except for the decay rate of the bound in $T$.

This difference is due to the differing decay rate of the algorithmic robustness bound. In the context of transformers, this is as $\mathcal{O}(1/t)$, whereas in Theorem~\ref{th:one}, it is calculated to be $\mathcal{O}(t^3\rho^t)$. This produces a difference only in the martingale difference bound. For transformers, this yields an excess risk bound which decays as $\sqrt{\log^3T/T}$, whereas for SSMs, a decay rate of $\rho^T\sqrt{T^7\log T}$ appears in \eqref{eq:fpg4}. The interested reader may verify that the proof of the aforementioned theorem in \cite{ZheTransformers2023} has no further differences between its application to transformers and SSM. Therefore, we are done.
\end{proof}

We note that the exponential decay of the excess risk means that this bound is significantly stronger for SSMs than it is for transformers. This may explain the superior filtering performance of SSMs in some applications observed in the empirical work.

%%%%%%%%%%%%%%%%%%%%%%%%%%%%%%%%%%%%%%%%%%%%%%%%%%%%%%%%%%%%%%%%%%%%%%%%%%%%%%%%
\section{EMPIRICAL RESULTS}
\label{sec:experiment}
We now evaluate the SSM architecture on online prediction tasks for unknown dynamical systems. We compare its performance with that of a model-free transformer architecture~\cite{ZheTransformers2023} and a classical model-based baseline, namely the standard Kalman filter~(KF). We begin by training the SSM and transformer architectures on trajectories of length $T$ drawn from $M$ randomly sampled systems. 

To evaluate performance, we generate $N$ test systems with trajectory length $T$ and report the root-mean-square (RMS) prediction error, defined as
\begin{align}
\mathrm{RMS}(t)
= \sqrt{\frac{1}{N}\sum_{i=1}^{N}
\left\|\mathbf{y}_{t+1}^{\mathcal{S}_i}-\hat {\mathbf{y}}_{t+1}^{\mathcal{S}_i}\right\|_2^2 } \nonumber
\end{align}
for $t=0,\dots,T-1$, where $\yb_{t+1}^{\mathcal{S}_i}$ denotes the true output of the $i$-th system at time $t+1$, and $\hat \yb_{t+1}^{\mathcal{S}_i}$ denotes the predicted output at the same time.

In the experiments, the SSM architecture uses an internal state dimension of $256$ and an embedding dimension of $512$, for a total of 662 thousand trainable parameters. For the continuous-time diagonal state matrix $\Ab$ in \eqref{eq:mambaltv}, we initialize the largest eigenvalue $s_A$ of $\Ab$ 
to $-0.1$. As a transformer baseline, we consider a GPT-2 architecture~\cite{radford2019language} with model dimension $64$, two attention heads, and two hidden layers, comprising 3.3 million trainable parameters. Both learned models are trained for 50 epochs with a batch size of 64 using the mean-squared-error loss. The implementation is publicly available on GitHub.\footnote{\url{https://github.com/batinkurt1/SSMs_for_Filtering}}

Unless otherwise stated, we train on trajectories of length $T=50$ generated from $M=10000$ randomly sampled systems and evaluate on trajectories of the same length generated from $N=10000$ unseen test systems in the experiments.
\subsection{Linear-Gaussian Setting}\label{sec:lg_setting}
We first consider the linear-Gaussian setting, in which the general nonlinear system in~\eqref{eq:ps1} reduces to
\begin{align}
\begin{aligned}
    \xb_{t+1}^{\mathcal{S}} &= \Ab^{\mathcal{S}} \xb_t^{\mathcal{S}} + \wb_{t}\\
\yb_t^{\mathcal{S}} &= \Cb^{\mathcal{S}} \xb_t^{\mathcal{S}} + \vb_t,
\end{aligned}\nonumber
\end{align}
where $\wb_t \sim \mathcal{N}(\zerob,\sigma_{\wb}^2 \Ib)$ and
$\vb_t \sim \mathcal{N}(\zerob,\sigma_{\vb}^2 \Ib)$.
For each system, the entries of $\Ab^{\mathcal{S}}$ and $\Cb^{\mathcal{S}}$ are generated by sampling each entry independently and uniformly from $[-1.0,1.0]$. As the notation is slightly overloaded, let us clarify to eliminate any confusion that all $\Ab^{\mathcal{S}}$ and $\Cb^{\mathcal{S}}$ in this section refer to \textit{system dynamics} matrices, not SSM parameters as in \eqref{eq:mambaltv}. The matrix $\Ab^{\mathcal{S}}$ is then rescaled such that its spectral radius is $0.95$, and the pair $(\Ab^{\mathcal{S}},\Cb^{\mathcal{S}})$ is required to be observable.

The state and output dimensions are set to $n=5$ and $m=3$, respectively. The process and measurement noise variances are chosen as $\sigma_{\wb}^2=0.01$ and $\sigma_{\vb}^2=0.01$.

\begin{figure*}[!t]
\centering
\begin{subfigure}[t]{0.42\textwidth}
    \centering
    \includegraphics[width=\linewidth]{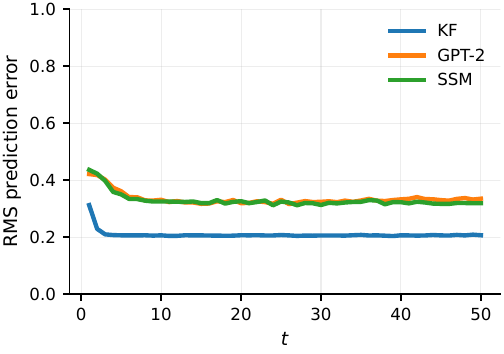}
    \caption{Linear-Gaussian setting}
    \label{fig:linear_case_standard}
\end{subfigure}\hfill
\begin{subfigure}[t]{0.42\textwidth}
    \centering
    \includegraphics[width=\linewidth]{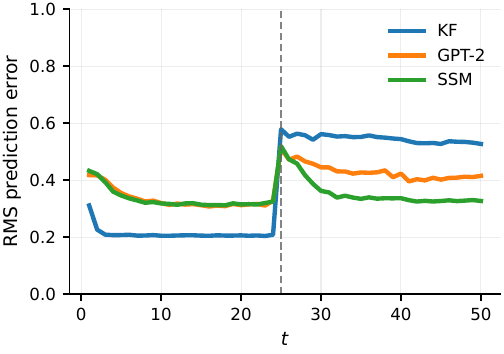}
    \caption{Linear setting with a dynamics switch at $T/2$}
    \label{fig:linear_case_switching}
\end{subfigure}

\medskip

\begin{subfigure}[t]{0.42\textwidth}
    \centering
    \includegraphics[width=\linewidth]{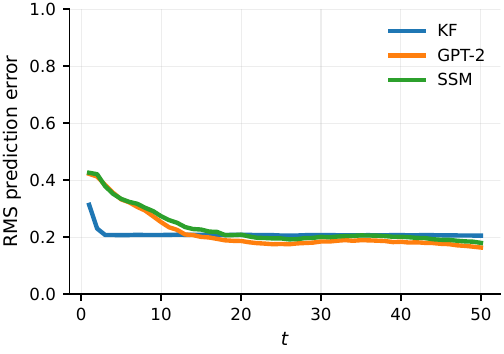}
    \caption{Linear setting with colored noise}
    \label{fig:linear_case_colored_noise_long_correlation}
\end{subfigure}\hfill
\begin{subfigure}[t]{0.42\textwidth}
    \centering
    \includegraphics[width=\linewidth]{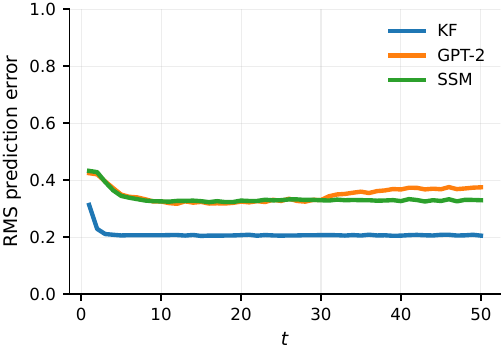}
    \caption{Length generalization}
    \label{fig:linear_case_length_generalization}
\end{subfigure}
\caption{Comparison of RMS prediction errors for the selective SSM, GPT-2, and the Kalman filter in online prediction tasks.} 
\label{fig:linear_system_experiments}
\end{figure*}

The results are shown in Fig.~\ref{fig:linear_case_standard}. As expected, the Kalman filter achieves the lowest prediction error in this linear-Gaussian setting, since it has full knowledge of the underlying model and is optimal for this problem. Although the learned models do not have access to the system matrices, both GPT-2 and the SSM attain stable prediction performance after a short burn-in period. Notably, the SSM achieves this performance with substantially fewer trainable parameters than GPT-2, using 662 thousand parameters compared to 3.3 million for the transformer baseline.

\subsection{Linear Setting with a Dynamics Switch at $T/2$}
In this experiment, the SSM model and the transformer baseline trained in Section~\ref{sec:lg_setting} are tested on linear systems whose dynamics change abruptly at time $T/2$. Specifically, for each test trajectory, the system evolves according to
\begin{align}
    \begin{aligned}
    \xb_{t+1}^{\mathcal{S}} &=
    \begin{cases}
    \Ab_1^{\mathcal{S}} \xb_t^{\mathcal{S}} + \wb_{t+1}, & t < T/2,\\
    \Ab_2^{\mathcal{S}} \xb_t^{\mathcal{S}} + \wb_{t+1}, & t \geq T/2,
    \end{cases}\\
    \yb_t^{\mathcal{S}} &=
    \begin{cases}
    \Cb_1^{\mathcal{S}} \xb_t^{\mathcal{S}} + \vb_t, & t < T/2,\\
    \Cb_2^{\mathcal{S}} \xb_t^{\mathcal{S}} + \vb_t, & t \geq T/2,
    \end{cases}
    \end{aligned} \nonumber
\end{align}
where $(\Ab_1^{\mathcal{S}},\Cb_1^{\mathcal{S}})$ and $(\Ab_2^{\mathcal{S}},\Cb_2^{\mathcal{S}})$ are two distinct observable linear systems generated according to the same procedure as in Section~\ref{sec:lg_setting}. Note that the learned models are trained only on non-switching trajectories, and hence do not observe such abrupt changes during training. We compare the SSM with GPT-2 and the Kalman filter. In this experiment, the Kalman filter is initialized with the first model and continues to use it after the switch, and is therefore mismatched for $t \geq T/2$.

The results are shown in Fig.~\ref{fig:linear_case_switching}. At time $T/2$, all methods exhibit a transient increase in prediction error due to the abrupt change in the underlying system. Since the Kalman filter continues to use the pre-switch model after $T/2$, its error increases significantly and remains high after the switch. In comparison, both learned models adapt to the new dynamics from the observed trajectory, with the SSM recovering more quickly than GPT-2 and attaining a lower post-switch error. This indicates that the SSM is more resilient to abrupt changes in system dynamics.

\subsection{Linear Setting with Colored Noise}
As the next experiment, we consider the same linear system as in Section~\ref{sec:lg_setting}, with the only difference being that the process and measurement noises are now colored. Specifically, we define
\begin{equation}
\wb_t = \frac{1}{\sqrt{15}}\sum_{t'=t-14}^{t} \etab_{t'},\; \vb_t = \frac{1}{\sqrt{15}}\sum_{t'=t-14}^{t} \nub_{t'}, \nonumber
\end{equation}
where $\{\etab_t\}_{t\in\mathbb{Z}}$ and $\{\nub_t\}_{t\in\mathbb{Z}}$ are i.i.d. Gaussian sequences with
    $\etab_t \sim \mathcal{N}(\zerob,\sigma_{\etab}^2 \Ib)$ and
$\nub_t \sim \mathcal{N}(\zerob,\sigma_{\nub}^2 \Ib)$. We take $\sigma_{\etab}^2=0.01$ and $\sigma_{\nub}^2=0.01$. The normalization factor $1/\sqrt{15}$ is introduced so that the marginal variance of the colored noise matches that of the underlying white-noise sequence.

In this setting, the standard Kalman filter is mismatched and is therefore no longer optimal, since the white-noise assumption on the process and measurement noise is violated. The results are shown in Fig.~\ref{fig:linear_case_colored_noise_long_correlation}. The Kalman filter maintains an essentially constant prediction error, effectively treating the noise sequence as white, and thus cannot take advantage of the temporal correlation in the process and measurement noise. By contrast, the learned models are able to capture the temporal correlation and attain a lower prediction error after a short burn-in period. Both GPT-2 and the SSM achieve a lower error than the mismatched Kalman filter after the burn-in period, indicating that the learned sequence models can accommodate temporally correlated process and measurement noise.

\subsection{Length Generalization}
We now examine the ability of the learned models to generalize to trajectories longer than those seen during training. We consider the linear-Gaussian setting of Section~\ref{sec:lg_setting}, and train the SSM and transformer architectures on trajectories of length $T=30$ generated from $M=10000$ randomly sampled systems. The trained models are then evaluated on trajectories of length $T=50$ generated from $N=10000$ unseen test systems according to the same procedure as in Section~\ref{sec:lg_setting}. The difference in train--test sequence length induces a distribution shift and requires the learned models to extrapolate beyond the training horizon.

The results are shown in Fig.~\ref{fig:linear_case_length_generalization}. Up to the training horizon, the SSM and GPT-2 exhibit similar behavior, consistent with the standard linear-Gaussian setting in Fig.~\ref{fig:linear_case_standard}. Beyond the training horizon, GPT-2 shows a clear increase in prediction error, whereas the SSM remains comparatively stable. This indicates that the SSM generalizes more robustly to longer trajectories and is less sensitive to the train--test mismatch in sequence length.

   % This command serves to balance the column lengths
                                  % on the last page of the document manually. It shortens
                                  % the textheight of the last page by a suitable amount.
                                  % This command does not take effect until the next page
                                  % so it should come on the page before the last. Make
                                  % sure that you do not shorten the textheight too much.

%%%%%%%%%%%%%%%%%%%%%%%%%%%%%%%%%%%%%%%%%%%%%%%%%%%%%%%%%%%%%%%%%%%%%%%%%%%%%%%%
\section{CONCLUSIONS AND FUTURE WORK}

In this paper, the use of selective state space models~(SSMs) for online filtering and output prediction of unknown dynamical systems has been
investigated. In the considered setting, the SSM is trained on trajectory data generated from a family of systems and is then deployed on unseen systems from the same family, using past outputs to predict the next output online. Under suitable assumptions, we derived theoretical generalization bounds for SSMs that help explain why selective SSMs can succeed in this in-context filtering setting. Our empirical results show that selective SSMs are effective for this task across several representative examples. They further indicate that selective SSMs can provide robust performance under certain forms of modeling mismatch and distribution shift.

Future work includes extending the present analysis to broader selective SSM architectures and deriving generalization bounds that explicitly account for distribution shifts. Another important direction is to study multi-horizon prediction in order to characterize the capabilities and limitations of selective SSMs beyond one-step-ahead filtering.

\addtolength{\textheight}{-3cm}

%%%%%%%%%%%%%%%%%%%%%%%%%%%%%%%%%%%%%%%%%%%%%%%%%%%%%%%%%%%%%%%%%%%%%%%%%%%%%%%%
%\section{ACKNOWLEDGMENTS}

%%%%%%%%%%%%%%%%%%%%%%%%%%%%%%%%%%%%%%%%%%%%%%%%%%%%%%%%%%%%%%%%%%%%%%%%%%%%%%%%
\bibliographystyle{IEEEtran}
\bibliography{references}

\end{document}